\documentclass[letterpaper,10pt,conference]{ieeeconf}
\IEEEoverridecommandlockouts
\makeatletter
\@ifundefined{keywords}{}{}
\makeatother

\usepackage{amsmath,amssymb,amsfonts}
\usepackage{graphicx}
\usepackage{bm}
\usepackage{mathtools}

\makeatletter
\@ifundefined{proof}{\usepackage{amsthm}}{}
\@ifundefined{theoremstyle}{\newcommand{\theoremstyle}[1]{}}{}
\makeatother
\theoremstyle{definition}
\newtheorem{assumption}{Assumption}
\newtheorem{property}{Property}
\newtheorem{theorem}{Theorem}
\newtheorem{lemma}{Lemma}
\newtheorem{remark}{Remark}
\begin{document}

\title{\LARGE \bf
Reduced-Order Backstepping Control of Nonlinear Mechanical Systems with Input Delay via Galerkin Projection
}

\author{Amirhossein Asadi and Nitin Sharma\thanks{A. Asadi is with the NC State Department of Mechanical and Aerospace Engineering, NC State University, Raleigh, NC 27606 USA (e-mail: aasadi@ncsu.edu). N. Sharma is with the UNC/NC State Joint Department of Biomedical Engineering, NC State University, Raleigh, NC 27606 USA (e-mail: nsharm23@ncsu.edu). *Corresponding author: Nitin Sharma.}}

\maketitle
\thispagestyle{empty}
\pagestyle{empty}

\begin{abstract}
This paper addresses trajectory tracking for a class of nonlinear mechanical systems with input delay. The delayed actuator channel is represented as an equivalent transport partial differential equation, thereby transforming the original delayed system into a coupled plant–channel representation. A finite-dimensional approximation of the delay channel is then constructed through Galerkin projection, resulting in an explicit ODE model of the delayed actuator dynamics. Using this reduced-order plant–channel model, a backstepping controller is developed, and Lyapunov analysis establishes uniform exponential stability of the nominal tracking error dynamics for the reduced-order Galerkin model. To enhance practical implementation, a command-filtered version is also introduced to avoid direct differentiation of virtual controls, for which a corresponding uniform ultimate boundedness result is obtained. Simulation studies are provided to validate the proposed design and illustrate the effectiveness of the method.  
\end{abstract}

\section{Introduction}

Input delays are a common source of performance degradation in nonlinear mechanical systems, as they can weaken tracking accuracy and even compromise closed-loop stability. This is particularly important when fast and reliable actuation is required but delay is unavoidable. Some studies address input delay from a robustness perspective. For example, \cite{yue2004robust} developed an LMI-based robust stabilization method for uncertain systems with input delay, while \cite{mazenc2013robustness} analyzed the robustness of nonlinear feedback laws under delayed and sampled actuation using Lyapunov analysis. These approaches mainly treat delay as a perturbation to be bounded and tolerated rather than as a dynamic channel to be explicitly compensated.

Another major line of research addresses input delay through explicit compensation, most commonly using predictor-based methods. For nonlinear mechanical systems, \cite{sharma2011predictor} proposed predictor-like PID/PD controllers for Euler--Lagrange dynamics with known constant input delay and established semi-global uniformly ultimately bounded tracking via Lyapunov--Krasovskii analysis. In an application-specific setting, \cite{alibeji2017modified} incorporated delayed activation dynamics into neuromuscular electrical stimulation control using modified dynamic surface control with online state estimation, demonstrating the benefit of explicitly accounting for actuator-side delay. PDE-based viewpoints have also been developed for actuator-delay systems, including first-order hyperbolic delay-channel representations and control of uncertain Euler--Lagrange dynamics \cite{krstic2008backstepping, chakraborty2016control}. More recent PDE-based formulations represent actuator delay as a transport equation and compensate it through predictor or backstepping constructions. For example, \cite{zekraoui2023finite} developed an infinite-dimensional nonlinear backstepping framework for delayed chains of integrators, while \cite{lai2025indirect} combined PDE-based predictor feedback, batch parameter estimation, and robust compensation for input-delay systems with unknown parameters and periodic disturbances. Related developments include approximate neural-operator predictors for nonlinear actuator-delay systems \cite{bhan2025delay} and predictor-based delay adaptation with safe backstepping for strict-feedback systems with unknown delay \cite{zhao2025safe}.

A related direction focuses on reducing the implementation complexity of recursive nonlinear designs. Dynamic surface control provides a filtered alternative to standard backstepping: the foundational work \cite{swaroop2002dynamic} used low-pass filters to avoid repeated differentiation of virtual controls and mitigate the explosion-of-terms problem. In a delayed setting, \cite{shojaei2023observer} developed observer-based adaptive dynamic surface control for uncertain time-delayed non-strict nonlinear systems with input nonlinearities, unknown control direction, and output constraints using Lyapunov--Krasovskii and barrier Lyapunov functions, fuzzy approximation, and state observation. These works emphasize reducing controller complexity while accommodating nonlinear effects and delays within a Lyapunov-based framework.

Taken together, existing delay-control strategies mainly compensate delay through predictor-based constructions or accommodate it within robust or Lyapunov-based designs. In contrast, the present work adopts a reduced-order modeling viewpoint, treating the delayed actuator channel as a dynamic object to be approximated and incorporated directly into the controller synthesis. This perspective is motivated by Galerkin-based studies of delay systems, where low-dimensional approximations reduce the computational burden associated with the infinite-dimensional state and provide accurate ODE surrogates for analysis and control \cite{wahi2005galerkin}. In particular, \cite{chekroun2017galerkin} developed Galerkin-Koornwinder approximations for nonlinear delay differential equations for control-oriented computations, while \cite{uchida2016stability} and \cite{sharafi2023stability} used Galerkin projection to convert delayed inverted-pendulum models into finite-dimensional ODE systems for stability analysis. Building on this viewpoint, we represent the actuator delay as a transport PDE and project it onto a finite-dimensional plant--channel model for controller design. The main contributions of this work are as follows. First, we adopt a reduced-order modeling viewpoint for input-delay compensation in nonlinear mechanical systems by projecting the transport-PDE actuator channel onto a Galerkin-based finite-dimensional model. Based on this approximation, we develop a backstepping controller that explicitly incorporates the channel-state dynamics. We further establish rigorous closed-loop guarantees through Lyapunov analysis, including uniform exponential stability for the reduced-order Galerkin design and uniform ultimate boundedness for a command-filtered implementation. Finally, simulation studies demonstrate the effectiveness of the proposed method and support the theoretical results.

\section{Problem Formulation}

Consider the nonlinear mechanical system with the input delay
\begin{equation}
M(q)\ddot q + C(q,\dot q)\dot q + G(q)=u(t-\tau),
\label{eq:1}
\end{equation}
where $q(t)\in\mathbb{R}^n$, $M(q)\in\mathbb{R}^{n\times n}$ is the inertia matrix, $C(q,\dot q)\in\mathbb{R}^{n\times n}$ denotes the Coriolis and centripetal matrix, $G(q)\in\mathbb{R}^n$ is the gravity vector, and $u(t-\tau)\in\mathbb{R}^n$ is the delayed control input subjected to a constant time delay $\tau>0$. The control objective is to design a feedback law such that the generalized coordinate $q(t)$ tracks a prescribed reference trajectory $q_d(t)\in\mathbb{R}^n$ despite the presence of the actuator delay. The following standard assumptions and one structural property are imposed throughout the paper.

\begin{assumption}
The mappings $M(q)$, $C(q,\dot q)$, and $G(q)$ are continuously differentiable on a compact operating region $\Omega \subset \mathbb{R}^{2n}$, and their values and first derivatives are bounded on $\Omega$. Moreover, $M(q)$ is symmetric positive definite, and there exist constants $0<m\le \bar m$ such that $mI \le M(q) \le \bar m I$ for all $(q,\dot q)\in\Omega$.
\end{assumption}

\begin{assumption}
The desired trajectory $q_d(t) \in \mathcal{C}^3$, with 
$q_d(t)$, $\dot{q}_d(t)$, $\ddot{q}_d(t)$, and $q_d^{(3)}(t)$ 
uniformly bounded.
\end{assumption}

\begin{property}
The mechanical system satisfies the standard skew-symmetry property, namely $\dot M(q)-2C(q,\dot q)$ is skew-symmetric for all $(q,\dot q)$ in the domain of interest.
\end{property}

\section{Galerkin Approximation of the Input-Delay Channel}

The presence of the delayed input in \eqref{eq:1} makes the closed-loop system infinite-dimensional, since the current plant behavior depends not only on the instantaneous input but also on the input history over the interval $[t-\tau,t]$. To handle this delay, we represent the actuator channel by a transport PDE and then approximate that PDE by a finite-dimensional model. The purpose of this section is therefore to replace the delayed-input system by an approximated reduced-order plant--channel interconnection that is more suitable for analysis and controller design.

To this end, we introduce the dynamic extension
\begin{equation}
\dot u(t)=\nu(t),
\label{eq:2}
\end{equation}
where $\nu(t)\in\mathbb{R}^n$ is a new control variable to be designed. This dynamic extension allows the delayed input to be represented as the state of a transport system. Inspired by \cite{krstic2009input}, we define the distributed state
\begin{equation}
y(s,t)=u(t+s), 
\label{eq:3}
\end{equation}
where $y(s,t)\in\mathbb{R}^n$ and $s\in[-\tau,0]$. Here, the spatial variable $s$ parameterizes the input history, with $s=0$ corresponding to the current command and $s=-\tau$ corresponding to the delayed signal that actually reaches the plant. Differentiating \eqref{eq:3} with respect to $t$ and $s$ gives the transport equation
\begin{equation}
\frac{\partial y(s,t)}{\partial t}
=
\frac{\partial y(s,t)}{\partial s}.
\label{eq:4a}
\end{equation}
subject to the boundary condition
\begin{equation}
y(0,t)=u(t).
\label{eq:4b}
\end{equation}
The delayed signal applied to the mechanical system is recovered at the left boundary of the transport domain:
\begin{equation}
u(t-\tau)=y(-\tau,t).
\label{eq:5}
\end{equation}
Thus, the input delay is represented exactly by a first-order PDE defined on the finite interval $[-\tau,0]$. 

To derive a finite-dimensional approximation, let $\phi(s)=[\phi_1(s),\phi_2(s),\ldots,\phi_N(s)]^\top$, where $\phi_i(s)$, $i=1,\dots,N$, are linearly independent basis functions. In practice, shifted Legendre polynomials are a convenient choice because of their orthogonality and favorable approximation properties on bounded intervals.

Since the transport state $y(s,t)$ is vector-valued, define the block basis matrix $\Phi(s)=\phi(s)\otimes I_n\in\mathbb{R}^{nN\times n}$ and approximate the distributed state by
\begin{equation}
y_N(s,t)=\Phi^\top(s)a(t).
\label{eq:8}
\end{equation}
The vector $a(t)\in\mathbb{R}^{nN}$ contains the modal coefficients of the truncated expansion and serves as the finite-dimensional state of the approximated delay channel.

The next step is to derive an evolution equation for $a(t)$. To impose the boundary condition in a weak form, introduce a Lagrange multiplier $\lambda(t)\in\mathbb{R}^n$ and consider
\begin{equation}
\frac{\partial y_N(s,t)}{\partial t}
=
\frac{\partial y_N(s,t)}{\partial s}
+\delta(s)\lambda(t).
\label{eq:9}
\end{equation}
where $\delta(s)$ denotes the Dirac delta distribution. The additional term acts only at the boundary and enforces the input constraint within the Galerkin framework.

Substituting \eqref{eq:8} into \eqref{eq:9}, premultiplying by $\Phi(s)$, and integrating over $[-\tau,0]$ yield the Galerkin equations
\begin{equation}
\mathcal{M}_N \dot a(t)=\mathcal{K}_N a(t)+\Phi(0)\lambda(t),
\label{eq:10}
\end{equation}
where
\begin{equation}
\mathcal{M}_N=\int_{-\tau}^{0}\Phi\Phi^\top ds,
\qquad
\mathcal{K}_N=\int_{-\tau}^{0}\Phi\frac{d\Phi^\top}{ds}ds,
\label{eq:MK}
\end{equation}
The matrix $\mathcal{M}_N\in\mathbb{R}^{nN\times nN}$ is the Galerkin mass matrix, while $\mathcal{K}_N\in\mathbb{R}^{nN\times nN}$ is the projected representation of the spatial derivative operator appearing in the transport PDE.

To eliminate the auxiliary multiplier $\lambda(t)$, we next use the boundary dynamics. Differentiating the boundary condition and using
$\dot u=\nu$ gives $\partial_t y(0,t)=\nu(t)$; under
\eqref{eq:8}, this becomes
$\Phi^\top(0)\dot a(t)=\nu(t)$.
Premultiplying \eqref{eq:10} by $\Phi^\top(0)\mathcal{M}_N^{-1}$ and using this boundary relation give
\begin{equation}
\nu(t)=\mathcal{C}_{2,N}a(t)+\mathcal{C}_{3,N}\lambda(t),
\label{eq:14}
\end{equation}
where
\begin{equation}
\mathcal{C}_{2,N}
=\Phi^\top(0)\mathcal{M}_N^{-1}\mathcal{K}_N,
\qquad
\mathcal{C}_{3,N}
=\Phi^\top(0)\mathcal{M}_N^{-1}\Phi(0).
\label{eq:C23}
\end{equation}
By Lemma 1, $C_{3,N}$ is nonsingular, and the
Lagrange multiplier can therefore be solved explicitly as
\begin{equation}
\lambda(t)
=
\mathcal{C}_{3,N}^{-1}
\big(
\nu(t)-\mathcal{C}_{2,N}a(t)
\big).
\label{eq:16}
\end{equation}
Substituting \eqref{eq:16} back into \eqref{eq:10} removes the boundary multiplier and produces a standard finite-dimensional state-space model for the channel:
\begin{equation}
\dot a(t)=A_N a(t)+B_N \nu(t),
\label{eq:17}
\end{equation}
where
\begin{equation}
A_N
=
\mathcal{M}_N^{-1}
\Big(
\mathcal{K}_N-\Phi(0)\mathcal{C}_{3,N}^{-1}\mathcal{C}_{2,N}
\Big),
\label{eq:18a}
\end{equation}
\begin{equation}
B_N
=
\mathcal{M}_N^{-1}\Phi(0)\mathcal{C}_{3,N}^{-1}.
\label{eq:18b}
\end{equation}

Once the modal dynamics have been obtained, the delayed signal applied to the plant is recovered from the left boundary of the approximated channel:
\begin{equation}
y_N(-\tau,t)=\Phi^\top(-\tau)a(t)=C_N a(t),
\label{eq:19}
\end{equation}
where $C_N=\Phi^\top(-\tau)\in\mathbb{R}^{n\times nN}$.

\begin{lemma}
For the shifted Legendre basis
$\phi_i(s)=P_{i-1}(1+2s/\tau)$, $i=1,\ldots,N$,
\begin{equation}
C_{3,N}=\frac{N^2}{\tau}I_n,
\qquad
E_N\triangleq C_NB_N=\frac{(-1)^{N-1}}{N}I_n.
\end{equation}
Hence, $C_{3,N}$ and $E_N$ are nonsingular for every
$N\geq1$.

\end{lemma}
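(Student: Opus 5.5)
The plan is to reduce everything to scalar computations by exploiting the Kronecker structure $\Phi(s)=\phi(s)\otimes I_n$, and then to use the orthogonality and endpoint values of the Legendre polynomials.

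First, by the mixed-product rule, $\Phi(s)\Phi^\top(s)=(\phi(s)\phi^\top(s))\otimes I_n$. Hence $\mathcal{M}_N=M_\phi\otimes I_n$ with $M_\phi=\int_{-\tau}^0\phi\phi^\top ds\in\mathbb{R}^{N\times N}$, and $\mathcal{M}_N^{-1}=M_\phi^{-1}\otimes I_n$. Consequently, for any $s_1,s_2\in[-\tau,0]$,
\begin{equation*}
\Phi^\top(s_1)\mathcal{M}_N^{-1}\Phi(s_2)=\big(\phi^\top(s_1)M_\phi^{-1}\phi(s_2)\big)\, I_n ,
\end{equation*}
so both $\mathcal{C}_{3,N}$ and $E_N$ are scalar multiples of $I_n$, and it suffices to compute two scalar quadratic forms.

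Second, I will compute $M_\phi$ via the substitution $x=1+2s/\tau$, which maps $[-\tau,0]$ onto $[-1,1]$ with $ds=(\tau/2)dx$. Legendre orthogonality, $\int_{-1}^1P_jP_k\,dx=\frac{2}{2k+1}\delta_{jk}$, gives
\begin{equation*}
M_\phi=\mathrm{diag}\Big(\frac{\tau}{2i-1}\Big)_{i=1}^N .
\end{equation*}
The endpoint values follow from $P_k(1)=1$ and $P_k(-1)=(-1)^k$: thus $\phi_i(0)=1$ and $\phi_i(-\tau)=(-1)^{i-1}$. Therefore
\begin{equation*}
\phi^\top(0)M_\phi^{-1}\phi(0)=\frac1\tau\sum_{i=1}^N(2i-1)=\frac{N^2}{\tau},
\end{equation*}
which yields $\mathcal{C}_{3,N}=\frac{N^2}{\tau}I_n$.

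Third, for $E_N$, I will use \eqref{eq:18b} and $C_N=\Phi^\top(-\tau)$ to write
\begin{equation*}
E_N=\Phi^\top(-\tau)\mathcal{M}_N^{-1}\Phi(0)\,\mathcal{C}_{3,N}^{-1}=\frac{\tau}{N^2}\cdot\frac1\tau\sum_{i=1}^N(-1)^{i-1}(2i-1)\, I_n .
\end{equation*}
The remaining step, which is the only mildly nontrivial one, is the alternating sum identity $\sum_{i=1}^N(-1)^{i-1}(2i-1)=(-1)^{N-1}N$. I will prove it by induction: it holds for $N=1$, and
\begin{equation*}
(-1)^{N-1}N+(-1)^N(2N+1)=(-1)^N(N+1).
\end{equation*}
Alternatively, one can pair consecutive terms, each pair contributing $-2$. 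This gives $E_N=\frac{(-1)^{N-1}}{N}I_n$.

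Nonsingularity is then immediate, since both matrices are nonzero scalar multiples of $I_n$ for every $N\ge1$.

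I expect no real obstacle. The points needing care are the Kronecker bookkeeping, so that $M_\phi^{-1}$ acts correctly on the block basis, and the orientation of the change of variables: $s=0\leftrightarrow x=1$ and $s=-\tau\leftrightarrow x=-1$, which is what produces the sign $(-1)^{i-1}$ at the delayed end.
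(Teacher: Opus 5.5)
Your proof is correct and follows the route the paper sketches. It combines Legendre orthogonality, which gives the diagonal mass matrix $\mathrm{diag}(\tau/(2i-1))$, with the endpoint values $P_k(1)=1$ and $P_k(-1)=(-1)^k$, and it supplies the details the paper omits: the Kronecker reduction to scalar quadratic forms, $\sum_{i=1}^N(2i-1)=N^2$, and $\sum_{i=1}^N(-1)^{i-1}(2i-1)=(-1)^{N-1}N$.
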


\begin{proof}
The result follows directly from Legendre orthogonality,
$P_i(1)=1$, and $P_i(-1)=(-1)^i$; details are omitted
for brevity.
\end{proof}

Accordingly, the original input-delay system \eqref{eq:1} is approximated by the finite-dimensional plant--channel interconnection
\begin{equation}
M(q)\ddot q + C(q,\dot q)\dot q + G(q)=C_N a,
\label{eq:21a}
\end{equation}
\begin{equation}
\dot a=A_N a+B_N \nu.
\label{eq:21b}
\end{equation}
This reduced-order model has channel-state dimension $nN$. In general, increasing the Galerkin order $N$ improves the accuracy of the delay approximation, although at the expense of a higher-dimensional controller realization.

\section{Control Design for the Reduced Order System}

In this section, it is assumed that the full state $(q,\dot q,a)$ is available for feedback. The control design proceeds by backstepping on the reduced-order plant--channel model \eqref{eq:21a}--\eqref{eq:21b}. The main difficulty is that the plant input $C_N a$ cannot be assigned directly; instead, it must be generated indirectly through the channel dynamics driven by the command-rate input $\nu$.

Introduce the tracking error variables
\begin{equation}
z_1(t)=q(t)-q_d(t),
\qquad
z_2(t)=\dot q(t)-\alpha_1(t),
\label{eq:25}
\end{equation}
where the first virtual control is chosen as
\begin{equation}
\alpha_1(t)=-K_1 z_1(t)+\dot q_d(t),
\label{eq:26}
\end{equation}
where $K_1\in\mathbb{R}^{n\times n}$ is a symmetric positive definite gain matrix. With this definition,
\begin{equation}
\dot z_1=z_2-K_1 z_1,
\label{eq:27}
\end{equation}
and
\begin{equation}
\dot z_2
=
M^{-1}(q)\!\left[C_N a-C(q,\dot q)\dot q-G(q)\right]-\dot\alpha_1.
\label{eq:28}
\end{equation}

Consider the Lyapunov function
\begin{equation}
V_1=\frac{1}{2}z_1^\top z_1
+\frac{1}{2}z_2^\top M(q)z_2.
\label{eq:V1}
\end{equation}
Using Property~1, its derivative is
\begin{equation}
\begin{aligned}
\dot V_1
= {} & z_1^\top z_2-z_1^\top K_1 z_1 \\
& +z_2^\top\!\left[ C_N a - C(q,\dot q)\alpha_1 - G(q) - M(q)\dot\alpha_1 \right].
\end{aligned}
\label{eq:32}
\end{equation}

At this point, the channel output $C_N a$ plays the role of a virtual delayed input that must be shaped to stabilize the mechanical subsystem. To this end, define the channel tracking error
\begin{equation}
z_3(t)=C_N a(t)-\alpha_2(t),
\label{eq:33}
\end{equation}
where $\alpha_2(t)\in\mathbb{R}^n$ is a second virtual control to be designed. Using \eqref{eq:33} to rewrite \eqref{eq:32} gives
\begin{equation}
\begin{aligned}
\dot V_1
= {} & z_1^\top z_2-z_1^\top K_1 z_1 \\
& +z_2^\top\!\left[ z_3+\alpha_2-C(q,\dot q)\alpha_1-G(q)-M(q)\dot\alpha_1 \right].
\end{aligned}
\label{eq:35}
\end{equation}

Choose the second virtual control as
\begin{equation}
\begin{aligned}
\alpha_2
= {} & -z_1-K_2 z_2+C(q,\dot q)\alpha_1+G(q)  +M(q)\dot\alpha_1.
\end{aligned}
\label{eq:36}
\end{equation}
where $K_2\in\mathbb{R}^{n\times n}$ is a symmetric positive definite gain matrix.
Under this choice, \eqref{eq:35} reduces to
\begin{equation}
\dot V_1
=
-z_1^\top K_1 z_1-z_2^\top K_2 z_2+z_2^\top z_3.
\label{eq:37}
\end{equation}
Thus, the remaining coupling term is entirely captured by $z_3$, which reflects the mismatch between the actual channel output and the desired virtual input.

It remains to design $\nu$ so that $C_N a$ tracks $\alpha_2$. Differentiating \eqref{eq:33} and using \eqref{eq:21b} yields
\begin{equation}
\dot z_3=C_NA_N a + C_NB_N\nu - \dot\alpha_2.
\label{eq:38}
\end{equation}
Here $E_N=C_NB_N$ is nonsingular by Lemma~1. The command-rate input is then selected as
\begin{equation}
\nu
=
E_N^{-1}
\left[
-C_NA_N a+\dot\alpha_2-z_2-K_3 z_3
\right].
\label{eq:40}
\end{equation}
where $K_3\in\mathbb{R}^{n\times n}$ is a symmetric positive definite gain matrix.
Substituting \eqref{eq:40} into \eqref{eq:38} gives the closed-loop channel error dynamics
\begin{equation}
\dot z_3=-z_2-K_3 z_3.
\label{eq:41}
\end{equation}

Now consider the composite Lyapunov function
\begin{equation}
V_2
=
\frac{1}{2}z_1^\top z_1
+\frac{1}{2}z_2^\top M(q)z_2
+\frac{1}{2}z_3^\top z_3.
\label{eq:42}
\end{equation}
Using \eqref{eq:37} and \eqref{eq:41}, its time derivative becomes
\begin{equation}
\begin{aligned}
\dot V_2
&=
-z_1^\top K_1 z_1-z_2^\top K_2 z_2+z_2^\top z_3
+z_3^\top(-z_2-K_3 z_3) \\
&=
-z_1^\top K_1 z_1-z_2^\top K_2 z_2-z_3^\top K_3 z_3.
\end{aligned}
\label{eq:43}
\end{equation}
This establishes that all error coordinates are dissipative under the proposed design.
\begin{theorem}
Consider the reduced-order system \eqref{eq:21a}--\eqref{eq:21b} under Assumptions 1--2 and Property 1. Suppose that the virtual controls and command-rate control are chosen according to \eqref{eq:26}, \eqref{eq:36}, and \eqref{eq:40}. Then, for all closed-loop trajectories evolving in the compact operating region $\Omega$, the origin of the corresponding closed-loop error system is uniformly exponentially stable. In particular, defining
\begin{equation}
\mathbf{z} \triangleq
\begin{bmatrix}
z_1^\top & z_2^\top & z_3^\top
\end{bmatrix}^\top,
\label{eq:43a}
\end{equation}
there exist positive constants $c_e$ and $\lambda_e$ such that
\begin{equation}
\|\mathbf{z}(t)\| \le c_e e^{-\lambda_e t}\|\mathbf{z}(0)\|, \qquad t\ge 0.
\label{eq:43b}
\end{equation}
Consequently, $z_1(t)$, $z_2(t)$, and $z_3(t)$ converge to zero exponentially as $t\to\infty$.
\end{theorem}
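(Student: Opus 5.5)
The plan is a standard quadratic-Lyapunov argument built on the composite function $V_2$ in \eqref{eq:42} and the identity \eqref{eq:43}.

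\textbf{Step 1: sandwich bounds for $V_2$.} By Assumption~1, $mI\le M(q)\le \bar m I$ holds along trajectories that remain in $\Omega$. Setting
\[
\lambda_1 \triangleq \tfrac12\min\{1,m\}, \qquad \lambda_2 \triangleq \tfrac12\max\{1,\bar m\},
\]
we obtain
\[
\lambda_1\|\mathbf z\|^2\le V_2\le \lambda_2\|\mathbf z\|^2 .
\]
These constants do not depend on $t$, because $M$ depends only on $q$ and the bounds are uniform on $\Omega$. This is the source of \emph{uniform} stability.

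\textbf{Step 2: strict decrease of $V_2$.} Computing $\dot V_2$ uses Property~1 to cancel $\tfrac12 z_2^\top\dot M z_2$ against $-z_2^\top C z_2$. This cancellation is already contained in the passage from \eqref{eq:32} to \eqref{eq:37}, and \eqref{eq:41} then gives \eqref{eq:43}. Let $k_{\min}$ be the smallest of the minimum eigenvalues of the symmetric positive definite gains $K_1,K_2,K_3$. Then
\[
\dot V_2\le -k_{\min}\|\mathbf z\|^2\le -\frac{k_{\min}}{\lambda_2}V_2 .
\]

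\textbf{Step 3: exponential bound.} The comparison lemma (or Gr\"onwall) gives $V_2(t)\le V_2(0)e^{-(k_{\min}/\lambda_2)t}$. Combining this with the sandwich bounds yields \eqref{eq:43b} with
\[
c_e=\sqrt{\lambda_2/\lambda_1}, \qquad \lambda_e=\frac{k_{\min}}{2\lambda_2}.
\]
Exponential convergence of each $z_i$ follows immediately.

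\textbf{Main obstacle: well-posedness of the closed loop.} The control \eqref{eq:40} must be well defined, and the error system must be a genuine ODE in $\mathbf z$ with the right-hand side evaluated at actual trajectories. Three points need checking.
\begin{itemize}
\item $E_N^{-1}$ exists by Lemma~1.
\item $\dot\alpha_2$ has to be expressible without circularity. It involves $\ddot\alpha_1=-K_1\dot z_1+\ddot q_d$, together with $\dot M$, $\dot C$, $\dot G$ and $\ddot q$. Each of these is computable from the state $(q,\dot q,a)$ via \eqref{eq:21a}, and each is bounded on $\Omega$ by Assumption~1 and Assumption~2 (using $q_d\in\mathcal C^3$). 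So $\nu$ is a continuous function of $(t,q,\dot q,a)$ that is locally Lipschitz in the state, and solutions exist.
\item The theorem is stated only for trajectories evolving in $\Omega$, so the eigenvalue bounds of Step~1 may be used without a separate invariance argument.
\end{itemize}
If one also wants to exclude finite escape, note that $V_2$ is nonincreasing, so $\mathbf z$ stays bounded. Boundedness of $q,\dot q$ then follows from bounded $q_d,\dot q_d$, and boundedness of $C_Na$ follows from \eqref{eq:33}. This is enough to continue the solutions for all $t\ge0$ within the stated setting.
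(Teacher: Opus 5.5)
Your proof is correct and is essentially the paper's argument: the same sandwich constants $\tfrac12\min\{1,m\}$ and $\tfrac12\max\{1,\bar m\}$, the same bound $\dot V_2\le -k\|\mathbf z\|^2\le -(k/c_2)V_2$ from \eqref{eq:43}, and the same $c_e=\sqrt{c_2/c_1}$, $\lambda_e=k/(2c_2)$. Your well-posedness paragraph goes beyond what the paper addresses, but it has two small slips: the expression $-K_1\dot z_1+\ddot q_d$ is $\dot\alpha_1$ (in fact $\ddot\alpha_1=-K_1\ddot z_1+q_d^{(3)}$), and since $\dot\alpha_2$ already contains first derivatives of $M,C,G$, the $\mathcal{C}^1$ hypothesis of Assumption~1 makes $\nu$ continuous but not necessarily locally Lipschitz, so uniqueness of solutions would need $\mathcal{C}^2$ model terms.
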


\begin{proof}
Define $\mathbf{z}$ as in \eqref{eq:43a}. By Assumption 1 and \eqref{eq:42}, there exist positive constants $c_1$ and $c_2$ such that
\begin{equation}
c_1=\frac{1}{2}\min\{1,m\}, \qquad
c_2=\frac{1}{2}\max\{1,\bar m\},
\label{eq:43c}
\end{equation}
and
\begin{equation}
c_1\|\mathbf{z}\|^2 \le V_2 \le c_2\|\mathbf{z}\|^2.
\label{eq:43d}
\end{equation}
Also, from \eqref{eq:43}, define
\begin{equation}
k \triangleq \min\left\{
\lambda_{\min}(K_1),\,
\lambda_{\min}(K_2),\,
\lambda_{\min}(K_3)
\right\}>0.
\label{eq:43e}
\end{equation}
Then
\begin{equation}
\dot V_2 \le -k\|\mathbf{z}\|^2 \le -\frac{k}{c_2}V_2.
\label{eq:43f}
\end{equation}
Integrating \eqref{eq:43f} and using \eqref{eq:43d} gives
\begin{equation}
\|\mathbf{z}(t)\|
\le
\sqrt{\frac{c_2}{c_1}}\,
e^{-\frac{k}{2c_2}t}\,
\|\mathbf{z}(0)\|,
\qquad t\ge 0.
\label{eq:43g}
\end{equation}
Hence, \eqref{eq:43b} holds with
$c_e=\sqrt{\frac{c_2}{c_1}}$ and $\lambda_e=\frac{k}{2c_2}$, which proves uniform exponential stability on $\Omega$.
\end{proof}

\section{Command-Filtered Control Design}

The command-rate law in \eqref{eq:40} requires the time derivative of the second virtual control. Since $\alpha_2$ already depends on nonlinear model terms together with the derivative of the first virtual control, direct computation of $\dot\alpha_2$ may be inconvenient in practice. To address this issue, a first-order command filter is introduced in place of the unfiltered virtual signal. The design follows a Lyapunov/backstepping framework, with the filtered implementation motivated by prior work \cite{farrell2009command} showing that filtering reduces differentiation burden while preserving backstepping stability.

Define the filtered virtual control $\alpha_{2,f}(t)\in\mathbb{R}^n$ by
\begin{equation}
\varepsilon_f \dot \alpha_{2,f}+\alpha_{2,f}=\alpha_2,
\qquad \varepsilon_f>0,
\label{eq:44}
\end{equation}
where $\varepsilon_f$ is a design constant. Next, define the filtered channel tracking error and the filter mismatch as
\begin{equation}
\tilde z_3=C_N a-\alpha_{2,f},
\qquad
\eta_2=\alpha_{2,f}-\alpha_2.
\label{eq:45}
\end{equation}
Then the channel output can be written as
\[
C_N a=\tilde z_3+\alpha_{2,f}
=\tilde z_3+\alpha_2+\eta_2.
\]
Substituting this relation into \eqref{eq:32} and using \eqref{eq:36} yields
\begin{equation}
\dot V_1
=
-z_1^\top K_1 z_1-z_2^\top K_2 z_2+z_2^\top \tilde z_3+z_2^\top \eta_2.
\label{eq:46}
\end{equation}

Differentiating the first relation in \eqref{eq:45} and using \eqref{eq:21b} give
\begin{equation}
\dot{\tilde z}_3=C_NA_N a + C_NB_N\nu-\dot\alpha_{2,f}.
\label{eq:47}
\end{equation}
Accordingly, the command-rate input is chosen as
\begin{equation}
\nu
=
E_N^{-1}
\left[
-C_NA_N a+\dot\alpha_{2,f}-z_2-K_3\tilde z_3
\right].
\label{eq:48}
\end{equation}
From \eqref{eq:44},
$\dot\alpha_{2,f}=-(1/\varepsilon_f)\eta_2$; hence,
\eqref{eq:48} requires no explicit evaluation of $\dot\alpha_2$.
Substituting \eqref{eq:48} into \eqref{eq:47} yields
$\dot{\tilde z}_3=-z_2-K_3\tilde z_3$.

\begin{assumption}
There exists $\bar d_2>0$ such that
$\|\dot\alpha_2(t)\|\le\bar d_2$ for all $t\ge0$
along closed-loop trajectories in the compact operating region.
\end{assumption}

Using \eqref{eq:44} and \eqref{eq:45}, the filter mismatch dynamics satisfy
\begin{equation}
\varepsilon_f \dot \eta_2+\eta_2=-\varepsilon_f \dot\alpha_2.
\label{eq:50}
\end{equation}
Hence, under Assumption 3,
\begin{equation}
\|\eta_2(t)\|
\le
e^{-t/\varepsilon_f}\|\eta_2(0)\|
+
\varepsilon_f \bar d_2\big(1-e^{-t/\varepsilon_f}\big).
\label{eq:51}
\end{equation}
Therefore, the filter mismatch remains bounded, and its asymptotic bound can be made arbitrarily small by selecting $\varepsilon_f$ sufficiently small.

Consider now the Lyapunov function candidate
\begin{equation}
V_3
=
\frac{1}{2}z_1^\top z_1
+\frac{1}{2}z_2^\top M(q)z_2
+\frac{1}{2}\tilde z_3^\top \tilde z_3.
\label{eq:52}
\end{equation}
Using \eqref{eq:46} and the closed-loop relation $\dot{\tilde z}_3=-z_2-K_3\tilde z_3$, its time derivative becomes
\begin{equation}
\dot V_3
=
-z_1^\top K_1 z_1
-z_2^\top K_2 z_2
-\tilde z_3^\top K_3 \tilde z_3
+z_2^\top \eta_2.
\label{eq:53}
\end{equation}
Let $k_i=\lambda_{\min}(K_i)$, $i=1,2,3$, and
$\beta=\min\{k_1,k_2/2,k_3\}$.
Applying Young's inequality to the coupling term gives
\begin{equation}
z_2^\top \eta_2
\le
\frac{k_2}{2}\|z_2\|^2+\frac{1}{2k_2}\|\eta_2\|^2.
\end{equation}
Consequently,
\begin{equation}
\dot V_3
\le
-\beta\big(\|z_1\|^2+\|z_2\|^2+\|\tilde z_3\|^2\big)
+\frac{1}{2k_2}\|\eta_2\|^2.
\label{eq:54}
\end{equation}

\begin{theorem}
Consider the reduced-order system \eqref{eq:21a}--\eqref{eq:21b} under Assumptions 1--3 and Property 1. Suppose that the virtual controls are chosen according to \eqref{eq:26} and \eqref{eq:36}, the filtered virtual control is generated by \eqref{eq:44}, and the command-rate input is selected as in \eqref{eq:48}. Then, for closed-loop trajectories evolving in the compact operating region $\Omega$, the filtered error system $(z_1,z_2,\tilde z_3)$
is uniformly ultimately bounded. Moreover, after the filter transient decays,
\begin{equation}
\limsup_{t\to\infty}
\sqrt{
\|z_1(t)\|^2+\|z_2(t)\|^2+\|\tilde z_3(t)\|^2
}
\le
\sqrt{\frac{c_2}{c_1}}\,
\frac{\varepsilon_f \bar d_2}{\sqrt{2\beta k_2}},
\label{eq:55}
\end{equation}
where $c_1$ and $c_2$ are defined in \eqref{eq:43c}.
\end{theorem}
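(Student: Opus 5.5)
Starting from the dissipation inequality \eqref{eq:54}, together with the sandwich bounds \eqref{eq:43d}, I will run a standard comparison argument. Set $\tilde{\mathbf z}=[z_1^\top\ z_2^\top\ \tilde z_3^\top]^\top$. Since $V_3$ has exactly the structure of $V_2$ with $z_3$ replaced by $\tilde z_3$, Assumption~1 gives
\[
c_1\|\tilde{\mathbf z}\|^2\le V_3\le c_2\|\tilde{\mathbf z}\|^2
\]
with the same $c_1,c_2$ as in \eqref{eq:43c}. Then \eqref{eq:54} becomes
\[
\dot V_3\le -\frac{\beta}{c_2}V_3+\frac{1}{2k_2}\|\eta_2(t)\|^2 .
\]
Here $\eta_2$ is treated as an exogenous input whose size is controlled by \eqref{eq:51}. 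This requires Assumption~3 and the standing hypothesis that trajectories remain in $\Omega$.

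I would not integrate this directly, because the constant in \eqref{eq:55} suggests working with the asymptotic bound. From \eqref{eq:51}, for any $\epsilon>0$ there is $T$ such that $\|\eta_2(t)\|\le \varepsilon_f\bar d_2+\epsilon$ for $t\ge T$. This is the ``filter transient decays'' part of the statement. On $[T,\infty)$ the comparison lemma gives
\[
V_3(t)\le e^{-\frac{\beta}{c_2}(t-T)}V_3(T)+\frac{c_2}{2\beta k_2}(\varepsilon_f\bar d_2+\epsilon)^2 .
\]
On the compact interval $[0,T]$, $V_3$ stays bounded because $\eta_2$ is bounded there. Together these give uniform ultimate boundedness: the ultimate bound and the time to reach it depend only on bounds of the initial data, not on the initial time.

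Taking $\limsup$, letting $\epsilon\to0$, and using $\|\tilde{\mathbf z}\|^2\le V_3/c_1$ yields
\[
\limsup\|\tilde{\mathbf z}\|^2\le \frac{c_2}{c_1}\,\frac{\varepsilon_f^2\bar d_2^2}{2\beta k_2}.
\]
The square root of this is exactly \eqref{eq:55}.

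The main obstacle is not the estimate itself but justifying that the pieces fit together. First, Assumption~3 bounds $\dot\alpha_2$ only along trajectories in $\Omega$, so forward invariance of $\Omega$ has to be assumed, as the statement does with ``for closed-loop trajectories evolving in $\Omega$''. Second, one must check that the closed-loop relation $\dot{\tilde z}_3=-z_2-K_3\tilde z_3$ and \eqref{eq:46} really give \eqref{eq:53} without any extra $\eta_2$ term in the $\tilde z_3$ equation. This holds because \eqref{eq:48} uses $\dot\alpha_{2,f}$ exactly, so the only perturbation enters through $z_2^\top\eta_2$. I would state both points explicitly and then proceed with the comparison argument above.
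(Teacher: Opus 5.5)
Your proposal is correct and follows essentially the same route as the paper. Both arguments use the sandwich bounds $c_1\|\tilde{\mathbf z}\|^2\le V_3\le c_2\|\tilde{\mathbf z}\|^2$, the inequality $\dot V_3\le-(\beta/c_2)V_3+\|\eta_2\|^2/(2k_2)$, a comparison argument based on $\limsup_{t\to\infty}\|\eta_2(t)\|\le\varepsilon_f\bar d_2$ from \eqref{eq:51}, and conversion back through $c_1\|\tilde{\mathbf z}\|^2\le V_3$. Your explicit $\epsilon$--$T$ splitting simply spells out the ``standard comparison argument'' that the paper invokes without detail, and your two caveats (forward invariance of $\Omega$, and the absence of an $\eta_2$ term in the $\tilde z_3$ dynamics) are consistent with the paper's setup.
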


\begin{proof}
Let $\tilde{\mathbf z}=[z_1^\top,z_2^\top,\tilde z_3^\top]^\top$.
By Assumption~1 and \eqref{eq:52},
$c_1\|\tilde{\mathbf z}\|^2\le V_3\le c_2\|\tilde{\mathbf z}\|^2$.
From \eqref{eq:51},
$\limsup_{t\to\infty}\|\eta_2(t)\|\le\varepsilon_f\bar d_2$.
Using \eqref{eq:54} gives
$\dot V_3\le-(\beta/c_2)V_3+\|\eta_2\|^2/(2k_2)$.
The standard comparison argument therefore yields
$\limsup_{t\to\infty}V_3(t)\le c_2(\varepsilon_f\bar d_2)^2/(2\beta k_2)$.
Combining this with $c_1\|\tilde{\mathbf z}\|^2\le V_3$
establishes uniform ultimate boundedness and the corrected bound in \eqref{eq:55}.
\end{proof}

\begin{remark}
The command filter removes the need for direct differentiation of $\alpha_2$, but it introduces the mismatch term $\eta_2$ into the closed-loop analysis. As a result, the exact exponential convergence obtained for the ideal design is replaced here by a practical tracking result. The ultimate bound is proportional to $\varepsilon_f \bar d_2$ and can therefore be reduced by choosing a sufficiently small filter constant.
\end{remark}

\section{Simulation Results}

The numerical studies address two questions: how accurately the finite-dimensional Galerkin model reproduces the actual delayed plant, and how the nominal and command-filtered controllers of Theorems~1 and~2 perform using this reduced-order representation. A scalar nonlinear mechanical system was used to isolate and clearly visualize the effect of the delayed input channel. The parameters were $J_0=1.40$, $J_1=0.30$, and $\bar g=2.00$, giving
$M(q)=J_0+J_1\cos q$,
$C(q,\dot q)=\frac{1}{2}\frac{dM(q)}{dq}\dot q$, and
$G(q)=\bar g\sin q$.
The reference was $q_d(t)=0.25\sin(0.55t)$.
Simulations were run over $t\in[0,10]$ using \texttt{ode45} with relative and absolute tolerances $10^{-8}$ and $10^{-9}$, respectively. The command-filter constant was $\varepsilon_f=0.05$. For the approximation study,
$\tau\in\{0.01,0.05,0.15,0.30\}$ s,
$N\in\{1,2,4,6,8,10\}$, and
$[q(0),\dot q(0)]^\top=[0.30,-0.10]^\top$.

\begin{figure}[t]
\centering
\includegraphics[width=0.95\linewidth]{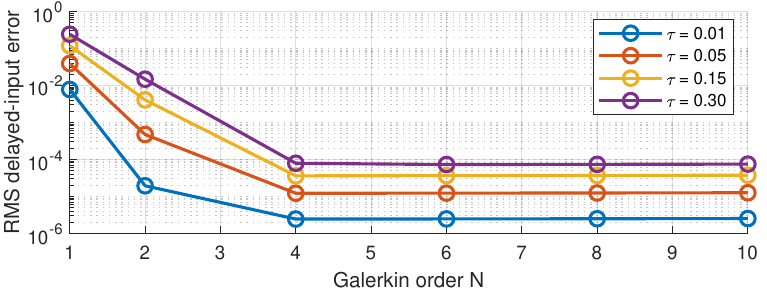}
\caption{Galerkin approximation errors versus Galerkin order.}
\label{fig:galerkin_errors}
\end{figure}

\subsection{Galerkin Approximation Versus Actual Delayed System}

To assess the Galerkin approximation independently of the feedback controller, the smooth command
\[
u(t)=0.12+0.65\sin(1.15t+0.20)+0.18\sin(2.40t-0.35)
\]
was applied to the delay channel. The actual delayed plant used the exact input $u(t-\tau)$, whereas the reduced-order model evolved $a(t)$ through the Galerkin channel dynamics and applied the reconstructed signal $y_N(-\tau,t)=C_Na(t)$.

Fig.~\ref{fig:galerkin_errors} shows the RMS delayed-input error, which directly measures the mismatch between $C_Na$ and $u(t-\tau)$, for increasing $N$ and several delay values. The error decreases rapidly with $N$, with the largest improvement occurring from the lowest orders to moderate-order approximations, indicating that only a few modes capture the dominant delay-channel dynamics. Beyond approximately $N=4$, the curves flatten and additional modes provide only marginal improvement. These results support the reduced-order viewpoint: over the tested delay range, the Galerkin model accurately reproduces the delayed channel while retaining a finite-dimensional realization.

\begin{figure}[t]
    \centering
    \includegraphics[width=\linewidth]{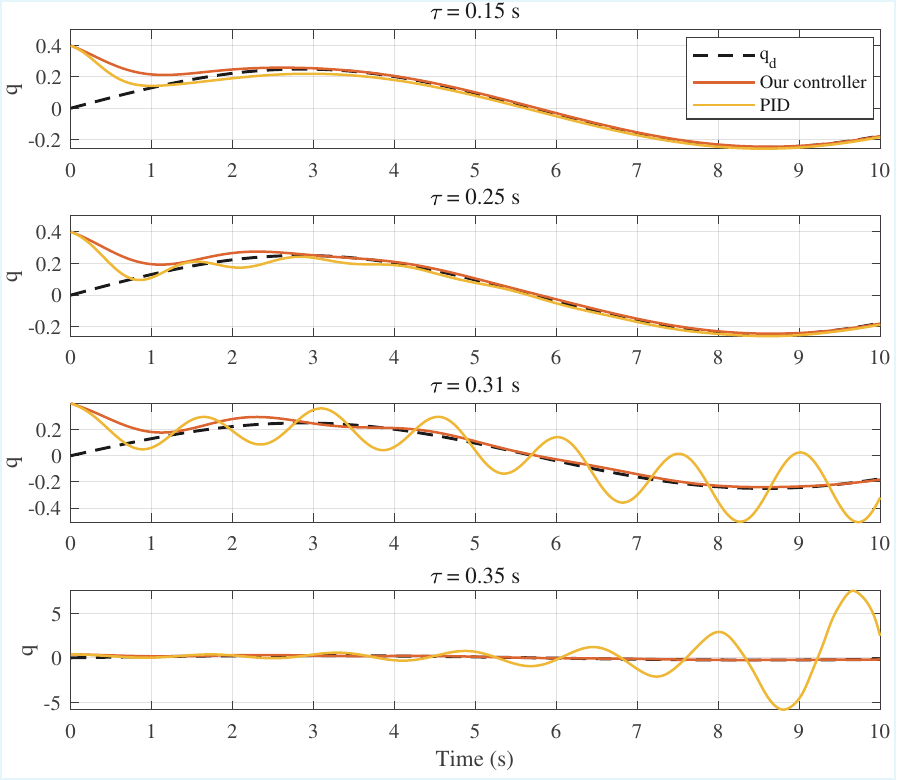}
    \caption{Tracking comparison between the nominal controller in Theorem~1 and a computed-torque PID baseline on the actual delayed plant for increasing input delays.}
    \label{fig:theorem1_vs_pid}
\end{figure}

\subsection{Nominal Controller Versus PID on the Actual Delayed Plant}

The controller of Theorem~1 was next applied to the actual plant with delayed input $u(t-\tau)$ and compared with a computed-torque PID baseline to evaluate the benefit of explicitly modeling the delay channel. As shown in Fig.~\ref{fig:theorem1_vs_pid}, both controllers achieve acceptable tracking for smaller delays, although the proposed controller remains closer to the reference and exhibits a smoother transient. As the delay increases, the PID response develops increasingly pronounced oscillations and departs from the desired trajectory, while the proposed controller remains bounded and well behaved. At the largest delays, the PID exhibits significant oscillatory growth, whereas the proposed controller avoids this severe degradation. Thus, explicitly incorporating the delayed actuator channel into the controller improves delay tolerance over the tested range.

\begin{figure}[t]
    \centering
    \includegraphics[width=\linewidth]{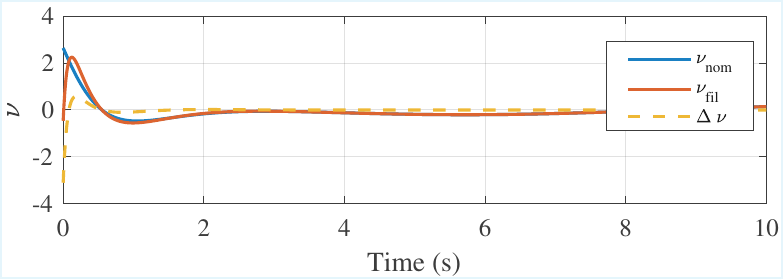}\\[0.5mm]
    {\footnotesize (a) Derivative of control input}\\[0.5mm]
    \includegraphics[width=\linewidth]{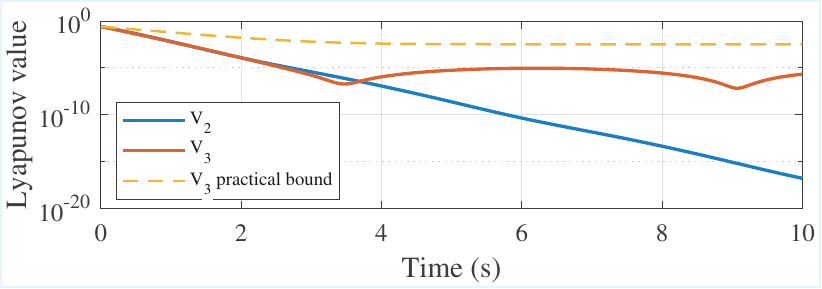}\\[0.5mm]
    {\footnotesize (b) Lyapunov functions and practical bound}
    \caption{Comparison of nominal and command-filtered controllers: (a) control-input derivative and (b) Lyapunov functions with practical bound.}
    \label{fig:nominal_vs_filtered}
\end{figure}

\vspace{-2mm}

\subsection{Nominal Versus Command-Filtered Controller}

Finally, the nominal controller of Theorem~1 is compared with the command-filtered design of Theorem~2 to illustrate the theoretical tradeoff: the nominal design yields exponential convergence for the reduced-order model, whereas the filtered design avoids direct differentiation of the virtual control at the cost of a practical boundedness result.

Fig.~\ref{fig:nominal_vs_filtered}(a) shows that the two command-rate signals become very close after a brief filter transient and their difference rapidly decreases. Thus, the filter preserves the dominant behavior of the nominal backstepping law while avoiding exact differentiation of the virtual control. Fig.~\ref{fig:nominal_vs_filtered}(b) shows that the nominal Lyapunov function decreases toward zero, consistent with Theorem~1, whereas the filtered Lyapunov function initially decreases and then settles to a small nonzero level, consistent with the practical result of Theorem~2. The filtered trajectory also remains below the corresponding practical bound throughout the simulation.

\section{Conclusion}

In this paper, a backstepping-based tracking controller was developed for nonlinear mechanical systems with actuator input delay by approximating the transport-PDE delay channel with a finite-dimensional Galerkin model. Based on this reduced-order model, the nominal design was shown to guarantee uniform exponential stability, while a command-filtered version was developed to avoid direct differentiation and shown to ensure uniform ultimate boundedness. Simulations supported the approach by showing good agreement between the PDE and Galerkin models and improved delay tolerance compared with a PID baseline. Several limitations remain and motivate future work. The analysis is established for the reduced-order Galerkin model rather than the original infinite-dimensional delayed system. Although simulations show that the PDE--Galerkin approximation error is small, its effect on closed-loop stability has not yet been incorporated into the Lyapunov analysis. The current design also assumes availability of the full reduced-order state $(q,\dot q,a)$; observer-based estimation of velocity and channel states will be investigated for settings with noisy or unavailable measurements. In addition, the current study considers a relatively simple system; future work will extend the framework to more complicated systems, investigate unknown delays, and pursue experimental validation.

\bibliographystyle{IEEEtran}
\bibliography{references}

\end{document}